\def\ord{{\rm ord}}
\def\be{\begin{enumerate}}
\def\ee{\end{enumerate}}
\newtheorem{thm}{Theorem}[section]
\newtheorem{pro}[thm]{Proposition}
\newtheorem{lemma}[thm]{Lemma}
\newtheorem{de}[thm]{Definition}
\numberwithin{equation}{section}
\title{States of finite effect algebras}
\begin{document}

\author[G. Bi\'nczak]{G. Bi\'nczak$^1$}

\address{$^1$ Faculty of Mathematics and Information Sciences\\
Warsaw University of Technology\\
00-662 Warsaw, Poland}

\author[J. Kaleta]{J. Kaleta$^2$}

\address{$^2$ Department of Applied Mathematics\\
 Warsaw University of Life Sciences\\02-787 Warsaw, Poland}

\author[A. Zembrzuski]{A.Zembrzuski$^3$}

	\address{$^3$  Department of Informatics, Faculty of Applied Informatics and Mathematics,
	Warsaw University of Life Sciences\\02-787 Warsaw, Poland}

\email{$^1$ binczak@mini.pw.edu.pl, $^2$joanna\_kaleta@sggw.edu.pl, $^3$ andrzej\_zembrzuski@sggw.edu.pl}

\keywords{effect algebras, state of effect algebra}

\subjclass[2010]{81P10, 81P15}

\begin{abstract}

In this paper we  characterize finite effect algebras which have a state. We construct two matrices $A$ and $B$ assigned to a finite effect algebra $E$ and show that if $E$ has a state then rank$A=$ rank$B$.
\end{abstract}

\maketitle

\section{\bf Introduction}
Effect algebras have been introduced by Foulis and Bennet in 1994 (see \cite{FB94}) for the study of foundations of quantum mechanics (see \cite{DP00}). Independtly, Chovanec and K\^opka introduced an essentially equivalent structure called $D$-{\em poset} 
(see \cite{KC94}). Another equivalent structure was introduced by Giuntini and Greuling in \cite{GG94}).

The most important example of an effect algebra is $(E(H),0,I,\oplus)$, where $H$ is a Hilbert space and $E(H)$ consists of all self-adjoint operators $A$ on $H$ such that $0\leq A\leq I$. For $A,B\in E(H)$, $A\oplus B$ is defined if and only if $A+B\leq I$ and then $A\oplus B=A+B$. Elements of $E(H)$ are called {\em effects} and they play an important role in the theory of quantum measurements (\cite{BLM91},\cite{BGL95}).

A quantum effect may be treated as two-valued (it means $0$ or $1$) quantum  measurement that may be unsharp (fuzzy).  If there exist some pairs of effects $a,b$ which posses an orthosum $a\oplus b$ then this orthosum correspond to a  parallel measurement of two effects. 

In this paper we  assign to each finite effect algebra a  matrix $A$ such that $E$ has a state if and only if  rank $A$=rank $(A|1)$ and matrix equation 
$AX=\left[\begin{array}{c}1\\\vdots\\1\end{array}\right]$ has a  solution in the form
$X=\left[\begin{array}{c}x_1\\\vdots\\x_n\end{array}\right]$ such that $x_i\geq0$ for $1\leq i\leq n$.

Let us start with the following definition of an effect algebra.

\begin{de}
In \cite{FB94} an {\em effect algebra} is defined to be an algebraic system $(E,0,1,\oplus)$  consisting of a set $E$, two special elements $0,1\in E$ called  the {\em zero} and the {\em unit}, and a partially defined binary operation $\oplus$ on $E$ that satisfies  the following conditions for all $p,q,r\in E$:
\be
\item{} [Commutative Law] If $p\oplus q$ is defined, then $q\oplus p$ is defined and $p\oplus q=q\oplus p$.
\item {}[Associative Law] If $q\oplus r$ is defined and $p\oplus(q\oplus r)$ is defined, then $p\oplus q$ is defined, $(p\oplus q)\oplus r$ is defined, and $p\oplus(q\oplus r)=(p\oplus q)\oplus r$.
\item {}[Orthosupplementation Law] For every $p\in E$ there exists a unique $q\in E$ such that $p\oplus q$ is defined and $p\oplus q=1$.
\item {}[Zero-unit Law] If $1\oplus p$ is defined, then $p=0$.
\ee
\end{de}

For simplicity , we often refer to $E$, rather than to $(E,0,1,\oplus)$, as being an effect algebra.
\begin{de}
For effect algebras $E_1,E_2$ a mapping $\phi\colon E_1\to E_2$ is said to be an {\em isomorphism} if  $\phi$ is a bijection, 
$a\perp b\iff \phi(a)\perp\phi(b)$, $\phi(1)=1$ and $\phi(a\oplus b)=\phi(a)\oplus \phi(b)$.
\end{de}
If $p,q\in E$, we say that $p$ and $q$ are orthogonal and write $p\perp q$ iff $p\oplus q$ is defined in $E$. If $p,q\in E$ and $p\oplus q=1$, we call $q$ the {\em orthosupplement} of $p$ and write $p'=q$.

It is shown in \cite{FB94} that the relation $\leq$ defined for $p,q\in E$ by $p\leq q$ iff $\exists r\in E$ with $p\oplus r=q$ is a partial order on $E$ and $0\leq p\leq 1$ holds for all $p\in E$. It is also shown that the mapping $p\mapsto p'$ is an order-reversing involution and that $q\perp p$ iff $q\leq p'$. Furtheremore, $E$ satisfies the following {\em cancellation law}: If $p\oplus q\leq r\oplus q$, then $p\leq r$.

For $n\in{\mathbb N}$ and $x\in E$ let $nx=x\oplus x\oplus\ldots\oplus x$ ($n$-times). We write $\ord(x)=n\in {\mathbb N}$ if $n$ is the greatest integer such that $nx$ exists in $E$.

An {\em atom} of an effect algebra $E$ is a minimal element of $E\setminus\{0\}$. An effect algebra $E$ is {\em atomic} if for every non-zero element $x\in E$ there exists an atom $a\in E$ such that $a\leq x$.

\begin{de}
A state on an effect algebra $E$ is a map $s\colon E\to[0,1]\subseteq{\mathbb R}$ such that $s(1)=1$ and if $a\oplus b$ is defined, then $s(a)+s(b)\leq 1$ and $s(a\oplus b)=s(a)+s(b)$.
\end{de}

\begin{pro} \cite[Proposition 3.1]{W14}\label{W3.1}
Let $E$ be an orthocomplete atomic effect algebra. Then for every $x\in R$, there is a set $\{a_i|i\in I\}$ of mutually different atoms in $E$ and a set $\{k_i| i\in I\}$ of positive integers such that $x=\bigoplus\{k_ia_i|i\in I\}$.
\end{pro}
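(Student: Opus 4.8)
The plan is to build the decomposition by a maximality argument (the statement should of course read ``for every $x\in E$''). The case $x=0$ is trivial, via the empty family, so fix a nonzero $x\in E$ and let $A$ be the set of atoms of $E$. I would consider the collection $\mathcal P$ of all functions $f\colon A\to{\mathbb N}\cup\{0\}$ for which the family $\{f(a)a : a\in A,\ f(a)>0\}$ is orthogonal with orthosum $\bigoplus_{a}f(a)a\leq x$, ordered pointwise. The idea is to extract a maximal $f\in\mathcal P$ by Zorn's lemma, put $y=\bigoplus_a f(a)a$, and then show $y=x$; the atoms $a_i$ with $k_i:=f(a_i)>0$ are then the required data.

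The technical core, which I would isolate as a preliminary lemma, is the following consequence of orthocompleteness: a family $\{b_i\}_{i\in I}$ is orthogonal with $\bigoplus_i b_i\leq x$ \emph{if and only if} every finite subfamily has an orthosum that is $\leq x$. The nontrivial direction rests on the identity $u\leq x\iff u\perp x'$, which follows from the cited fact $q\perp p\iff q\leq p'$ together with the involution $p''=p$. If every finite partial sum is $\leq x$, hence orthogonal to $x'$, then the augmented family $\{b_i\}_{i\in I}\cup\{x'\}$ has all finite subfamilies jointly orthogonal, so by orthocompleteness it admits an orthosum; infinitary associativity then lets me split off the single term $x'$ and conclude that $\bigoplus_i b_i$ exists and $\bigoplus_i b_i\perp x'$, i.e.\ $\bigoplus_i b_i\leq x$. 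Before invoking Zorn I would also check that every atom has finite order: if $\ord(a)=\infty$, the lemma applied with $x=1$ makes the infinite family of copies of $a$ orthogonal with an orthosum $s$ satisfying $s=a\oplus s$, whence the cancellation law forces $a=0$, a contradiction. This finiteness guarantees that along any chain in $\mathcal P$ the values $f_\alpha(a)\leq\ord(a)$ stay bounded, so the pointwise supremum is a well-defined ${\mathbb N}\cup\{0\}$-valued function, and the lemma shows it again lies in $\mathcal P$ (each of its finite partial sums is dominated by one member of the chain and hence is $\leq x$). Thus $\mathcal P$ is chain-complete and Zorn's lemma yields a maximal $f$.

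Finally I would show the maximal $y=\bigoplus_a f(a)a$ equals $x$. If not, then $y<x$, so $x=y\oplus z$ with $z\neq 0$, and atomicity supplies an atom $b\leq z$. Writing $z=b\oplus w$ and using associativity, $y\oplus b$ is defined and $y\oplus b\leq x$. But $y\oplus b$ is precisely the orthosum obtained from $f$ by raising $f(b)$ by one, so the incremented function $f'$ still belongs to $\mathcal P$ and strictly dominates $f$, contradicting maximality. Hence $y=x$, and setting $I=\{a\in A:f(a)>0\}$ with $k_i=f(a_i)$ completes the proof.

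I expect the main obstacle to be the preliminary lemma, and in particular the careful handling of infinitary associativity in an orthocomplete effect algebra needed to manipulate orthosums of infinite families and to peel off the term $x'$; the finite-order observation and the cancellation argument behind it are the subtle supporting point. Once these are in place, the Zorn's lemma skeleton and the atom-extraction step are routine.
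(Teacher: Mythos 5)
The paper does not prove this proposition at all: it is imported verbatim from \cite{W14} (complete with the typo ``$x\in R$'' for ``$x\in E$'', which you correctly flag), so there is no internal proof to compare yours against. Judged on its own terms, your Zorn's-lemma argument is the standard one for orthocomplete atomic effect algebras and its skeleton is sound: the poset of atom-multiplicity functions whose orthosum lies below $x$, chain-completeness via the finite order of atoms, and the atomicity step ruling out a proper maximal element all work as you describe. Two remarks. First, your preliminary lemma is easier than you make it: once every finite subfamily has an orthosum the family is orthogonal by definition, orthocompleteness then yields $\bigoplus_i b_i$ as the supremum of the finite partial sums, and since $x$ is an upper bound for all of these the supremum is $\leq x$; adjoining $x'$ and peeling it off again is unnecessary. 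The step where infinitary associativity genuinely cannot be avoided is the finite-order claim (deriving $s=a\oplus s$ uses the partition property of infinite orthosums), and that is a known but nontrivial fact about orthocomplete effect algebras which a complete write-up would have to cite or prove rather than wave at. Second, for the only case this paper actually uses ($E$ finite), the whole apparatus collapses to a short induction: pick an atom $a\leq x$, write $x=a\oplus z$, recurse on $z$; finiteness forces termination, and collecting equal atoms gives the stated form with mutually different $a_i$ and positive $k_i$.
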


Let $E$ be a finite effect algebra. Then $A$ is orthocomplete and atomic. If $E$ has atoms $a_1,\ldots a_m$ then by  \ref{W3.1}  for every $x\in E$ there exist non-negative integers $k_1,\ldots,k_m$ such that $x=\bigoplus\limits_{i=1}^mk_ia_i$. By $M_{n\times n}({\mathbb N})$ we denote the set of all $n\times m$ matrices which quantities are natural numbers (including $0$).

\begin{de}
Let $$A=\left[\begin{array}{ccc}
y_{11}&\ldots&y_{1m}\\
\vdots&&\vdots\\
y_{n1}&\ldots&y_{nm}\end{array}\right]\in M_{n\times m}({\mathbb N})$$ then denote $$\left[\begin{array}{cccc}
y_{11}&\ldots&y_{1m}&1\\
\vdots&&\vdots&\vdots\\
y_{n1}&\ldots&y_{nm}&1\end{array}\right]$$ by $(A|1)$.
\end{de}

\begin{de}\label{matr}
Let $E$ be a finite effect algebra with atoms $a_1,\ldots,a_m$. Then define $Seq(E)=\{(x_1,\ldots,x_m)\in{\mathbb N}^m\colon\bigoplus\limits_{t=1}^m x_ta_t=1\}$. Let $n=card(Seq(E))$ and
$$
\begin{array}{l}
M(E)=\left\{\left[\begin{array}{ccc}
y_{11}&\ldots&y_{1m}\\
\vdots&&\vdots\\
y_{n1}&\ldots&y_{nm}\end{array}\right]\right.\in M_{n\times m}({\mathbb N})\colon\mathop{\forall}\limits_{1\leq i<j\leq n}\quad\mathop{\exists}\limits_{1\leq t\leq m}y_{it}\not=y_{jt}, \\
\{(y_{i1},\ldots,y_{im})\in{\mathbb N}^m\colon 1\leq i\leq n\}=Seq(E)\Bigg\}.\end{array}
$$
\end{de}

It turns out that  rows in $A\in M(E)$ are all elements of $Seq(E)$. Moreover if $A,B\in M(E)$ then $B$ can be obtained from $A$ by permuting  its rows.

In order to prove the main Theorem, we need use the following two lemmas:

\begin{lemma}\label{k<y}
Let $E$ be a finite effect algebra with atoms  $a_1,\ldots,a_m$ and $$A=\left[\begin{array}{ccc}
y_{11}&\ldots&y_{1m}\\
\vdots&&\vdots\\
y_{n1}&\ldots&y_{nm}\end{array}\right]\in M(E).$$ If $x=\bigoplus\limits_{t=1}^mk_ta_t\in E$ then there exists $1\leq i\leq n$ such that $k_t\leq y_{it}$ for $1\leq t\leq m$.
\end{lemma}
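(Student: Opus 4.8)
The plan is to realize the given tuple $(k_1,\dots,k_m)$ as componentwise dominated by the coefficient tuple of some decomposition of the unit $1$, since by Definition \ref{matr} and the observation following it the rows of $A$ are exactly the elements of $Seq(E)$, i.e.\ exactly the coefficient tuples of the various ways of writing $1$ as an orthosum of atoms. So it suffices to produce one decomposition $1=\bigoplus\limits_{t=1}^m z_t a_t$ with $z_t\ge k_t$ for all $t$.

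First I would use that $x\in E$ forces $x\le 1$, so by the Orthosupplementation Law there is a unique orthosupplement $x'$ with $x\oplus x'=1$. Since $E$ is finite, it is orthocomplete and atomic, so Proposition \ref{W3.1} applies to $x'$ and yields non-negative integers $l_1,\dots,l_m$ with $x'=\bigoplus\limits_{t=1}^m l_t a_t$, where I allow zero coefficients so that the same list of atoms $a_1,\dots,a_m$ is used as for $x$. Next I would combine the two decompositions: because $x\oplus x'=1$ is defined and each of $x,x'$ is a finite orthosum of the atoms $a_1,\dots,a_m$, the generalized associative and commutative laws let me rewrite $x\oplus x'$ as the single orthosum $\bigoplus\limits_{t=1}^m (k_t+l_t)a_t$, which therefore exists and equals $1$. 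Hence $(k_1+l_1,\dots,k_m+l_m)\in Seq(E)$, so there is an index $1\le i\le n$ with $y_{it}=k_t+l_t$ for every $t$, and then $k_t\le k_t+l_t=y_{it}$, as required.

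The main obstacle is the middle step: justifying that the nested orthosum $x\oplus x'$ can be regrouped and reordered into $\bigoplus\limits_{t=1}^m (k_t+l_t)a_t$ without losing definedness. This is the only point where the effect-algebra axioms (rather than bookkeeping between elements and their coefficient tuples) do real work; it rests on the fact that once a finite orthosum is defined, every rearrangement of its summands is defined and has the same value, which follows by induction from the Commutative and Associative Laws. Everything else is a direct translation between elements of $E$ and their atom-coefficient tuples.
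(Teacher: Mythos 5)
Your proof is correct and is essentially identical to the paper's: both take the orthosupplement $x'$, expand it as $\bigoplus_{t=1}^m l_t a_t$ via Proposition \ref{W3.1}, observe that $\bigoplus_{t=1}^m (k_t+l_t)a_t = x\oplus x' = 1$ puts $(k_1+l_1,\ldots,k_m+l_m)$ in $Seq(E)$ and hence among the rows of $A$, and conclude $k_t\le y_{it}$. Your extra remark about justifying the regrouping of the orthosum via the commutative and associative laws is a point the paper passes over silently, but it does not change the argument.
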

\begin{proof}
Let $E$ be a finite effect algebra with atoms  $a_1,\ldots,a_m$ and $$A=\left[\begin{array}{ccc}
y_{11}&\ldots&y_{1m}\\
\vdots&&\vdots\\
y_{n1}&\ldots&y_{nm}\end{array}\right]\in M(E).$$
Let  $x=\bigoplus\limits_{t=1}^mk_ta_t\in E$ and $x'=\bigoplus\limits_{t=1}^ml_ta_t$ then $\bigoplus\limits_{t=1}^m(k_t+l_t)a_t=x\oplus x'=1$ so
$(k_1+l_1,\ldots,k_m+l_m)\in Seq(E)$ and there exists $1\leq i\leq k$ such that $(k_1+l_1,\ldots,k_m+l_m)=(y_{i1},\ldots,y_{im})$. Hence
$k_t\leq k_t+l_t=y_{it}$ for $1\leq t\leq m$.
\end{proof}

\begin{lemma}\label{kl}
Let $E$ be a finite effect algebra with atoms  $a_1,\ldots,a_m$ and $$A=\left[\begin{array}{ccc}
y_{11}&\ldots&y_{1m}\\
\vdots&&\vdots\\
y_{n1}&\ldots&y_{nm}\end{array}\right]\in M(E).$$ If $x=\bigoplus\limits_{t=1}^mk_ta_t=\bigoplus\limits_{t=1}^ml_ta_t\in E$ then there exists $1\leq p,q\leq k$ such that $k_t\leq y_{pt}$, $l_t\leq y_{qt}$ and $y_{pt}-k_t=y_{qt}-l_t$ for $1\leq t\leq m$.
\end{lemma}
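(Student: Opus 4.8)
The plan is to exploit the uniqueness of the orthosupplement, which is the one object attached to $x$ that does not depend on how $x$ is written as a sum of atoms. By Proposition~\ref{W3.1} the orthosupplement $x'$ of $x$ admits an atom decomposition $x'=\bigoplus_{t=1}^m c_ta_t$ for suitable non-negative integers $c_1,\ldots,c_m$, and this single element $x'$ will serve as a common witness for both representations $\bigoplus_{t=1}^m k_ta_t$ and $\bigoplus_{t=1}^m l_ta_t$ of $x$.

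First I would feed the pair $x=\bigoplus_{t=1}^m k_ta_t$ and $x'=\bigoplus_{t=1}^m c_ta_t$ into the computation already carried out in the proof of Lemma~\ref{k<y}: by commutativity and associativity of $\oplus$ we have $\bigoplus_{t=1}^m(k_t+c_t)a_t=x\oplus x'=1$, so $(k_1+c_1,\ldots,k_m+c_m)\in Seq(E)$ and hence equals some row $(y_{p1},\ldots,y_{pm})$ of $A$. Running the identical argument on the second representation $x=\bigoplus_{t=1}^m l_ta_t$ together with the \emph{same} $x'$ yields $(l_1+c_1,\ldots,l_m+c_m)\in Seq(E)$, realized by some row $(y_{q1},\ldots,y_{qm})$ of $A$.

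It then remains to read off the three asserted inequalities and the equality directly from these choices. Since $y_{pt}=k_t+c_t$ we get $k_t\le y_{pt}$ and $y_{pt}-k_t=c_t$, and since $y_{qt}=l_t+c_t$ we get $l_t\le y_{qt}$ and $y_{qt}-l_t=c_t$; comparing the two expressions gives $y_{pt}-k_t=c_t=y_{qt}-l_t$ for every $1\le t\le m$, which is exactly the claim.

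The only point requiring care---and the reason the lemma is not entirely formal---is that the rows $p$ and $q$ must share the \emph{same} offset vector $(c_1,\ldots,c_m)$. This is guaranteed precisely because the orthosupplement of $x$ is unique, so both representations of $x$ are completed to $1$ by one and the same element $x'$; accordingly I would fix the atom decomposition of $x'$ once and for all, before invoking the membership in $Seq(E)$ twice, rather than risk choosing two unrelated complements.
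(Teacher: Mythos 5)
Your proof is correct and is essentially identical to the paper's: the paper likewise fixes a single atom decomposition $x'=\bigoplus_{t=1}^m r_ta_t$ of the orthosupplement and adds it to both representations of $x$ to land in $Seq(E)$ twice with the common offset vector $(r_1,\ldots,r_m)$. Your closing remark about why the two rows share the same offset is exactly the (implicit) point of the paper's argument.
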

\begin{proof}
Let $E$ be a finite effect algebra with atoms  $a_1,\ldots,a_m$ and $$A=\left[\begin{array}{ccc}
y_{11}&\ldots&y_{1m}\\
\vdots&&\vdots\\
y_{n1}&\ldots&y_{nm}\end{array}\right]\in M(E).$$

Let $x=\bigoplus\limits_{t=1}^mk_ta_t=\bigoplus\limits_{t=1}^ml_ta_t\in E$. Let $z=x'=\bigoplus\limits_{t=1}^mr_ta_t$ then  $\bigoplus\limits_{t=1}^m(k_t+r_t)a_t=x\oplus x'=1$
and  $\bigoplus\limits_{t=1}^m(l_t+r_t)a_t=x\oplus x'=1$ so $(k_1+r_1,\ldots,k_m+r_m)\in Seq(E)$ and $(l_1+r_1,\ldots,l_m+r_m)\in Seq(E)$ and there exist $1\leq p,q\leq n$ such that
$(k_1+r_1,\ldots,k_m+r_m)=(y_{p1},\ldots,y_{pm})$ and $(l_1+r_1,\ldots,l_m+r_m)=(y_{q1},\ldots,y_{qm})$. Hence
$k_t\leq k_t+r_t=y_{pt}$, $l_t\leq l_t+r_t=y_{qt}$ and $y_{pt}-k_t=r_t=y_{qt}-l_t$ for $1\leq t\leq m$.
\end{proof}

The following theorem (the main theorem) gives the necessary and sufficient conditions enabling algebra to have a state.

\begin{thm}\label{main}
Let $E$ be a finite effect algebra with atoms  $a_1,\ldots,a_m$. Let $A\in M(E)$ and $B=(A|1)$. Then $E$ has a state if and only if rank $A=$ rank $B$
 and there exist
$s_1,\ldots,s_m\in[0,+\infty)$ such that $$A\cdot\left[\begin{array}{c}s_1\\\vdots\\s_m\end{array}\right]=\left[\begin{array}{c}1\\\vdots\\1\end{array}\right].$$
\end{thm}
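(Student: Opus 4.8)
The plan is to exploit the fact that a state $s$ on $E$ is completely determined by its values on the atoms. Setting $s_t=s(a_t)$ and combining additivity with Proposition \ref{W3.1}, every $x=\bigoplus_{t=1}^m k_t a_t$ is forced to satisfy $s(x)=\sum_{t=1}^m k_t s_t$. Thus the existence of a state is equivalent to finding a nonnegative vector $(s_1,\ldots,s_m)$ for which the formula $s(\bigoplus k_t a_t)=\sum k_t s_t$ is well defined, lands in $[0,1]$, and has $s(1)=1$. The two displayed conditions in the theorem are exactly the linear-algebraic shadow of these requirements, and the two preparatory lemmas are tailored to handle well-definedness and the upper bound $s(x)\le 1$.

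For the forward direction I would start from a given state $s$ and put $s_t=s(a_t)$. Since $s$ takes values in $[0,1]$, each $s_t\in[0,+\infty)$. Every row $(y_{i1},\ldots,y_{im})$ of $A$ belongs to $Seq(E)$, i.e. $1=\bigoplus_{t=1}^m y_{it}a_t$; applying $s$ and using additivity gives $\sum_{t=1}^m y_{it}s_t=s(1)=1$ for each $i$, which is precisely $A\,[s_1,\ldots,s_m]^{T}=[1,\ldots,1]^{T}$. Because this exhibits the all-ones column as a linear combination of the columns of $A$, that column lies in the column space of $A$, so rank $A=$ rank $B$. This settles the easy implication.

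For the converse I would define $s\colon E\to\mathbb{R}$ by $s(x)=\sum_{t=1}^m k_t s_t$ whenever $x=\bigoplus_{t=1}^m k_t a_t$, and then check the state axioms one by one. Well-definedness is the crux: if $x=\bigoplus k_t a_t=\bigoplus l_t a_t$, Lemma \ref{kl} supplies rows $p,q$ with $k_t\le y_{pt}$, $l_t\le y_{qt}$ and $y_{pt}-k_t=y_{qt}-l_t$; since rows $p$ and $q$ each give $\sum y_{pt}s_t=\sum y_{qt}s_t=1$, subtracting forces $\sum k_t s_t=\sum l_t s_t$. The bound $s(x)\le 1$ comes from Lemma \ref{k<y}: choosing a row $i$ with $k_t\le y_{it}$ and using $s_t\ge 0$ gives $\sum k_t s_t\le\sum y_{it}s_t=1$, while $s(x)\ge 0$ is immediate from nonnegativity, so $s$ indeed maps into $[0,1]$. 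Taking any row for the representation of $1$ yields $s(1)=1$. Finally, additivity follows because if $a\oplus b$ is defined then the generalized associative and commutative laws give $a\oplus b=\bigoplus_{t=1}^m(k_t+l_t)a_t$, whence $s(a\oplus b)=\sum(k_t+l_t)s_t=s(a)+s(b)$, and the inequality $s(a)+s(b)=s(a\oplus b)\le 1$ is automatic.

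I expect the main obstacle to be the well-definedness step, which is exactly why Lemma \ref{kl} is isolated in advance: the representation $x=\bigoplus k_t a_t$ need not be unique, so the value $\sum k_t s_t$ must be shown independent of the chosen representation, and the common-orthosupplement trick of Lemma \ref{kl} is what secures this. A secondary subtlety is justifying $a\oplus b=\bigoplus(k_t+l_t)a_t$ from the effect-algebra axioms. I would also remark that the rank condition rank $A=$ rank $B$ is automatically implied by the solvability of $A\,[s_1,\ldots,s_m]^{T}=[1,\ldots,1]^{T}$, so in the converse it carries no independent content beyond recording that the linear system is consistent.
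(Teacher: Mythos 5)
Your proposal is correct and follows essentially the same route as the paper's own proof: the forward direction reads off $s_t=s(a_t)$ and applies the consistency of the linear system (the paper cites Rouch\'e--Capelli where you argue via the column space), and the converse defines $h(\bigoplus k_t a_t)=\sum k_t s_t$, using Lemma \ref{kl} for well-definedness and Lemma \ref{k<y} for the bound $h(x)\le 1$, exactly as the paper does. Your closing remark that the rank condition carries no independent content in the converse is also consistent with how the paper uses the hypotheses.
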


\begin{proof} Let $$A=\left[\begin{array}{ccc}
y_{11}&\ldots&y_{1m}\\
\vdots&&\vdots\\
y_{k1}&\ldots&y_{km}\end{array}\right]\in M(E).$$
\be
\item[$\Leftarrow$]
Suppose that there exist $s_1,\ldots,s_m\in[0,+\infty)$ such that $$A\cdot\left[\begin{array}{c}s_1\\\vdots\\s_m\end{array}\right]=\left[\begin{array}{c}1\\\vdots\\1\end{array}\right].$$

Then $\sum\limits_{t=1}^my_{it}s_t=1$ for $1\leq i\leq k$.

Let $h\colon E\to{\mathbb R}$ be a map such that $h(\bigoplus\limits_{t=1}^mx_ta_t)=\sum\limits_{t=1}^mx_ts_t$. By \ref{W3.1} $h$ is defined for all $x\in E$.

 First we show that $h$ is well-defined.

Let $x\in E$ and $x=\bigoplus\limits_{t=1}^mk_ta_t=\bigoplus\limits_{t=1}^ml_ta_t$ by \ref{kl}  there exist $1\leq p,q\leq k$ such that $k_t\leq y_{pt}$, $l_t\leq y_{qt}$ and $y_{pt}-k_t=y_{qt}-l_t$ for $1\leq t\leq m$. Hence $\sum\limits_{t=1}^mk_ts_t-\sum\limits_{t=1}^ml_ts_t=\sum\limits_{t=1}^m(k_t-l_t)s_t=\sum\limits_{t=1}^m(y_{pt}-y_{qt})s_t=\sum\limits_{t=1}^my_{pt}s_t-\sum\limits_{t=1}^my_{qt}s_t=1-1=0$. Thus $\sum\limits_{t=1}^mk_ts_t=\sum\limits_{t=1}^ml_ts_t$ so $h(\bigoplus\limits_{t=1}^mk_ta_t)=h(\bigoplus\limits_{t=1}^ml_ta_t)$.

Now we show that $h\colon E\to[0,1]$. If $x=\bigoplus\limits_{t=1}^mk_ta_t$ then  there exists $1\leq i\leq n$ such that $k_t\leq y_{it}$ for $1\leq t\leq m$ by \ref{k<y}. Hence
$$
0\leq\sum_{t=1}^mk_ts_t=h(\bigoplus_{t=1}^m k_ta_t)\leq\sum_{t=1}^my_{it}s_t=1,
$$
that is why $0\leq h(x)\leq 1$.

Since $1=\bigoplus\limits_{t=1}^my_{1t}a_t$ so $h(1)=\sum\limits_{t=1}^my_{1t}s_t=1$.

Let $x_1,x_2\in E$, $x_1=\bigoplus\limits_{t=1}^mk_ta_t$ and $x_2=\bigoplus\limits_{t=1}^ml_ta_t$. Suppose $x_1\oplus x_2$ is defined then 
$$x_1\oplus x_2=\bigoplus_{t=1}^mk_ta_t\oplus\bigoplus_{t=1}^ml_ta_t=\bigoplus_{t=1}^m(k_t+l_t)a_t$$
 so by \ref{k<y} there exists  $1\leq i\leq n$ such that $k_t+l_t\leq y_{it}$ for $1\leq t\leq m$. Thus $h(x_1)+h(x_2)=\sum\limits_{t=1}^mk_ts_t+\sum\limits_{t=1}^ml_ts_t=\sum\limits_{t=1}^m(k_t+l_t)s_t\leq\sum\limits_{t=1}^my_{it}s_t=1$.

Moreover $h(x_1\oplus x_2)=h(\bigoplus\limits_{t=1}^m(k_t+l_t)a_t)=\sum\limits_{t=1}^m(k_t+l_t)s_t=\sum\limits_{t=1}^mk_ts_t+\sum\limits_{t=1}^ml_ts_t=h(x_1)+h(x_2)$. Hence
$h$ is a state.
\item[$\Rightarrow$]

Suppose that $E$ has a state $h\colon E\to[0,1]$. Let $s_t=h(a_t)\geq0$ for $1\leq t\leq m$. We know that $\bigoplus\limits_{t=1}^my_{it}a_t=1$ for $1\leq i\leq n$ since $A\in M(E)$.
Hence $1=h(1)=h(\bigoplus\limits_{t=1}^my_{it}a_t)=\sum\limits_{t=1}^my_{it}h(a_t)=\sum\limits_{t=1}^my_{it}s_t$. Thus
$$A\cdot\left[\begin{array}{c}s_1\\\vdots\\s_m\end{array}\right]=\left[\begin{array}{c}1\\\vdots\\1\end{array}\right].$$
By Rouch\'e-Capelli Theorem  rank $A=$ rank $B$, where $B=(A|1)$
\ee
\end{proof}

In \cite{G71} Greechie gives an example of finite effect algebra that has no states. This effect algebra $E$ has 12 atoms $\{a_1,\ldots,a_{12}\}$ such that $a_1\oplus a_2\oplus a_3\oplus a_4=1$,  $a_5\oplus a_6\oplus a_7\oplus a_8=1$,  $a_9\oplus a_{10}\oplus a_{11}\oplus a_{12}=1$,  $a_1\oplus a_5\oplus a_9=1$, $a_2\oplus a_6\oplus a_{10}=1$, $a_3\oplus a_7\oplus a_11=1$, $a_4\oplus a_8\oplus a_{12}=1$. Let

$$
A=\left[\begin{array}{cccccccccccc}
		1&	1&	1&	1&	0&	0&	0&	0&	0&	0&	0&	0\\
		0&	0&	0&	0&	1&	1&	1&	1&	0&	0&	0&	0\\
		0&	0&	0&	0&	0&	0&	0&	0&	1&	1&	1&	1\\
		1&	0&	0&	0&	1&	0&	0&	0&	1&	0&	0&	0\\
		0&	1&	0&	0&	0&	1&	0&	0&	0&	1&	0&	0\\
		0&	0&	1&	0&	0&	0&	1&	0&	0&	0&	1&	0\\
		0&	0&	0&	1&	0&	0&	0&	1&	0&	0&	0&	1
\end{array}\right]
$$

\bigskip
then $A\in M(E)$ and the echelon form of $(A|1)$ is (using Maxima)

\bigskip
$$\left[\begin{array}{ccccccccccccc}
		1&	0&	0&	0&	1&	0&	0&	0&	1&	0&	0&	0&	1\\
		0&	1&	0&	0&	0&	1&	0&	0&	0&	1&	0&	0&	1\\
		0&	0&	1&	0&	0&	0&	1&	0&	0&	0&	1&	0&	1\\
		0&	0&	0&	1&	0&	0&	0&	1&	0&	0&	0&	1&	1\\
		0&	0&	0&	0&	1&	1&	1&	1&	0&	0&	0&	0&	1\\
		0&	0&	0&	0&	0&	0&	0&	0&	1&	1&	1&	1&	1\\
		0&	0&	0&	0&	0&	0&	0&	0&	0&	0&	0&	0&	1
\end{array}\right].
$$

So rank$A=6$ and rank$(A|1)=7$. Thus $E$ has no state by \ref{main}.

We found (using Matlab) smaller example of finite effect algebra $E_4$ that has no states. This effect algebra $E_4$ has 3 atoms $\{a_1,a_2,a_3\}$ such that $a_1\oplus a_2\oplus a_3=1$,
$4a_1=4a_2=4a_3=1$. Let 

$$B=\left[\begin{array}{ccc}1&1&1\\4&0&0\\0&4&0\\0&0&4
\end{array}\right]
$$
then $B\in M(E_4)$ and rank$B=3$ and rank$(B|1)=4$ thus $E_4$ has no state by \ref{main}.

 The effect algebra $E_4$ is represented by the following Hasse diagram:

$$
\xymatrix{&&&&1\ar@{-}[dllll]\ar@{-}[d]\ar@{-}[drrrr]&&&&\\
3a_3=a_1\oplus a_2\ar@{-}[ddd]\ar@{-}[dddrrrr]\ar@{-}[ddrrrrrr]&&&&a_1\oplus a_3=3a_2\ar@{-}[dddllll]\ar@{-}[dd]\ar@{-}[dddrrrr]&&&&a_2\oplus a_3=3a_1\ar@{-}[ddllllll]\ar@{-}[dddllll]\ar@{-}[ddd]\\
&&&&&&&&\\
&&2a_1\ar@{-}[dll]&&2a_2\ar@{-}[d]&&2a_3\ar@{-}[drr]&&\\
a_1\ar@{-}[drrrr]&&&&a_2\ar@{-}[d]&&&&a_3\ar@{-}[dllll]\\
&&&&0&&&&}
$$

It remains to check if in Theorem \ref{main} the condition that there exist
$s_1,\ldots,s_m\in[0,+\infty)$ such that $$A\cdot\left[\begin{array}{c}s_1\\\vdots\\s_m\end{array}\right]=\left[\begin{array}{c}1\\\vdots\\1\end{array}\right].$$ can be omitted.

Problem. Prove or disprove: if $E$ is a finite effect algebra with atoms  $a_1,\ldots,a_m$, $A\in M(E)$ and $B=(A|1)$ then $E$ has a state if and only if rank $A=$ rank $B$.

\newpage


\begin{thebibliography}{99}
\bibitem{BF95}  {\sc M. K. Bennett , D. J.  Foulis }, {\em  Phi-Symmetric Effect Algebras}, Foundations of Physics. 25, No. 12, (1995), pp.~1699--1722.

\bibitem{BLM91}{\sc P. Bush, P. J. Lahti, P. Mittelstadt }, {\em The Quantum Theory of Measurement},  Lecture Notes in Phys. New Ser. m2, Springer-Verlag, Berlin, 1991.

\bibitem{BGL95}  {\sc P. Bush, M. Grabowski, P. J. Lahti}, {\em Operational Quantum Physics}, Springer-Verlag, Berlin, 1995.

\bibitem{DP00} {\sc Dvure\v censkij A., Pulmannov\'a}, {\em New Trends in Quantum Structures}, Kluwer Academic Publ./Ister Science, Dordrecht-Boston-London/Bratislava, 2000.

\bibitem{FB94} {\sc D. J. Foulis, M. K. Bennett }, {\em  Effect Algebras and Unsharp quantum Logics}, Foundations of Physics. 24, No. 10, (1994), pp.~1331--1351.

\bibitem{GG94} {\sc R. Giuntini, H. Grueuling }, {\em Toward a formal language for unsharp properties}, Found. Phys. 19, (1994), pp.~769--780.

\bibitem{G71} {\sc R. J. Greechie }, {\em Orthomodular Lattices Admitting No States}, Journal of Combinatorial Theory, 10, (1971), pp.~119--132.

\bibitem{GFP95} {\sc T. J. Greechie, D. J. Foulis, S. Pulmannov\'a}, {\em The center of an effect algebra}, Order 12, (1995), pp.~91--106.

\bibitem{KC94} {\sc F. K\^opka, F. Chovanec}, {\em $D$-posets}, Math. Slovaca 44, (1994), pp.~ 21--34.

\bibitem{W14} {\sc Wei Ji}, {\em Characterization of homogeneity in orthocomplete atomic effect algebras}, Fuzzy Sets and Systems 236, (2014), pp.~113--121.

\end{thebibliography}
\end{document}